\numberwithin{equation}{section}
\newtheorem{teo}{Theorem}[section]
\newtheorem{prop}[teo]{Proposition}
\theoremstyle{definition}
\newtheorem{rmk}[teo]{Remark}
\begin{document}
\allowdisplaybreaks

\newcommand{\arXivNumber}{2001.08613}

\renewcommand{\PaperNumber}{052}

\FirstPageHeading

\ShortArticleName{On the Extended-Hamiltonian Structure}

\ArticleName{On the Extended-Hamiltonian Structure of Certain\\ Superintegrable Systems on Constant-Curvature\\ Riemannian and Pseudo-Riemannian Surfaces}

\Author{Claudia Maria CHANU and Giovanni RASTELLI}

\AuthorNameForHeading{C.M.~Chanu and G.~Rastelli}

\Address{Dipartimento di Matematica, Universit\`a di Torino, Torino, Italia}
\Email{\href{mailto:claudiamaria.chanu@unito.it}{claudiamaria.chanu@unito.it}, \href{mailto:giovanni.rastelli@unito.it}{giovanni.rastelli@unito.it}}

\ArticleDates{Received March 21, 2020, in final form May 20, 2020; Published online June 11, 2020}

\Abstract{We prove the integrability and superintegrability of a family of natural Hamiltonians which includes and generalises those studied in some literature, originally defined on the 2D Minkowski space. Some of the new Hamiltonians are a perfect analogy of the well-known superintegrable system on the Euclidean plane proposed by Tremblay--Turbiner--Winternitz and they are defined on Minkowski space, as well as on all other 2D manifolds of constant curvature, Riemannian or pseudo-Riemannian. We show also how the application of the coupling-constant-metamorphosis technique allows us to obtain new superintegrable Hamiltonians from the previous ones. Moreover, for the Minkowski case, we show the quantum superintegrability of the corresponding quantum Hamiltonian operator.Our results are obtained by applying the theory of extended Hamiltonian systems, which is strictly connected with the geometry of warped manifolds.}

\Keywords{extended-Hamiltonian; superintegrable systems; constant curvature}

\Classification{37J35; 70H33}

\section{Introduction}\label{section1}

In \cite{MPT} it is proved that the Hamiltonian
\begin{gather}\label{H0}
H_0=p_1p_2-\alpha q_2^{2k+1}q_1^{-2k-3},
\end{gather}
$\alpha \in \mathbb R$, is superintegrable for any $k \in \mathbb Q$. The proof is obtained by introducing an irregular bi-Hamiltonian structure, in analogy with the rational Calogero--Moser system.

A partial generalization of this Hamiltonian has been later proved to be again superintegrable in \cite{CS}. This is the Hamiltonian
\begin{gather}\label{H}
H=p_1p_2-\alpha q_2^{2k+1}q_1^{-2k-3}-\frac \beta 2 q^k_2q_1^{-k-2},
\end{gather}
with $k \in \mathbb N$, $\alpha, \beta \in \mathbb R$.
In both the previous articles, the superintegrability of the Hamiltonians is proved via algebraic methods.
In this article, we prove that $H$ is indeed superintegrable for any $k \in \mathbb Q$ and that its Hamilton--Jacobi equation is separable, implying the integrability of the system, for any $k \in \mathbb R$. Moreover, we determine a new class of Hamiltonians which gene\-ra\-lizes~$H$ and is again superintegrable. We obtain these results by showing that $H$ can be written in form of an extended Hamiltonian. We introduced extended Hamiltonian systems in~\cite{CDRPol} and in~\cite{CDRfi,CDRgen,CDRraz,TTWcdr} we generalized the idea of extended systems. An extended Hamiltonian is a~Hamiltonian $H$ on a cotangent bundle,
with $n+1$ degrees of freedom, which is obtained from a~Hamiltonian $L$ with $n$ degrees of freedom, depending on a rational parameter $k$ and admitting a non-trivial characteristic and explicitly determined first integral~$K$. When~$L$ is polynomial in the momenta, the same is for~$H$ and for~$K$. Moreover, if $L$ is a natural Hamiltonian on the cotangent bundle of a Riemannian (or pseudo-Riemannian) manifold, i.e., $L$ is a Hamiltonian of mechanical type, then its extensions (if any) are natural Hamiltonians too.
Even if in the most general setting the Hamiltonians involved in the extension procedure are not necessarily natural (see~\cite{NoNN} for examples of extensions of non-natural Hamiltonians), in this article, however, we apply the general theory to natural Hamiltonians only, because we are dealing with the natural Hamiltonian~(\ref{H}) and some generalizations of it.
The term {\it extension} is motivated by the fact that our procedure increases the degrees of freedom of the Hamiltonian system, this is the only relation with other existing extension procedures, for example the Eisenhart extension. Since the characteristic first integral is generated by the recursive application of a~particular operator, our extension procedure is essentially algebraic, but it admits a geometric characterization in the case of natural Hamiltonians.
The class of extended Hamiltonians, parametrized by $k\in \mathbb Q$, includes many (but not all, see Remark~\ref{na}) of the known superintegrable Hamiltonians admitting a polynomial in the momenta first integral of degree dependent on any chosen rational number~$k$. Examples which belong to the above class are, for instance, the isotropic and non-isotropic harmonic oscillators, the three-particle Jacobi--Calogero system, the Wolfes system, the Tremblay--Turbiner--Winternitz system, the Post--Winternitz system. In particular, this last system is studied in \cite{CDRpw} by applying the technique of coupling-constant-metamorphosis. We apply here the same technique to obtain from~$H$, and our generalizations of it, new superintegrable systems. In Section~\ref{ex} we recall the basic elements of the theory of extended systems. In Section~\ref{s3} we show that $H$ is indeed an extended Hamiltonian. In Section~\ref{section4} we apply the theories of extended Hamiltonians and coupling-constant-metamorphosis to obtain new superintegrable systems. Moreover we show that the same family of extended Hamiltonians of $H$ includes Hamiltonians on other constant-curvature two-dimensional manifolds: Euclidean planes, spheres, pseudo-spheres, de~Sitter and anti-de~Sitter spaces. These Hamiltonians can be considered as generalisations of $H$ to constant-curvature manifolds. In Section~\ref{section5}, the Laplace--Beltrami quantization of $H$ and its characteristic first integral, together with their generalisations on flat manifolds, is obtained via the shift-ladder approach proposed by S.~Kuru and J.~Negro in~\cite{KN} and adapted to extended Hamiltonians in~\cite{CRsl}.

\section{Extensions of Hamiltonian systems}\label{ex}
Let $L\big(q^i,p_i\big)$ be any Hamiltonian function with $N$ degrees of freedom, defined on the cotangent bundle of an $N$-dimensional manifold with coordinates $\big(q^i\big)$ and conjugate momenta $(p_i)$.

We say that $L$ {\em admits extensions}, if there exists $(c,c_0)\in \mathbb R^2
\setminus \{(0,0)\}$ such that there exists a non null solution $G\big(q^i,p_i\big)$ of
\begin{gather}\label{e1}
X_L^2(G)=-2(cL+c_0)G,
\end{gather}
where $X_L$ is the Hamiltonian vector field of $L$.

If $L$ admits extensions, then, for any $\gamma(u)$ solution of the ODE
\begin{gather}\label{eqgam}
\gamma'+c\gamma^2+C=0,
\end{gather}
depending on the arbitrary constant parameter~$C$, we say that any Hamiltonian $H\big(u,q^i,p_u,p_i\big)$ with $N+1$ degrees of freedom of the form
\begin{gather}\label{Hest}
H=\frac{1}{2} p_u^2-\left(\frac mn \right)^2\gamma'L+ \left(\frac mn \right)^2c_0\gamma^2
+\frac{\Omega}{\gamma^2},
\qquad (m,n)\in \mathbb{N}\setminus \{0\}, \quad \Omega\in\mathbb{R}
\end{gather}
is an {\em extension of $L$}.

Extensions of Hamiltonians where introduced in \cite{CDRfi} and studied because, when $L$ is a polynomial in the momenta Hamiltonian and in particular when~$L$ is a natural Hamiltonian, they admit a polynomial in the momenta first integral generated via a recursive algorithm;
its degree in $(p_u,p_i)$
 depends on the value of
 $m,n\in \mathbb{N}\setminus \{0\}$.
For a generic Hamiltonian $L$, given any $m,n\in \mathbb N\setminus\{0\}$,
let us consider the operator
\begin{gather}\label{Umn}
U_{m,n}=p_u+\frac m{n^2}\gamma X_L.
\end{gather}

\begin{prop}[\cite{CDRraz}]For $\Omega=0$, the Hamiltonian \eqref{Hest}
is in involution with the function
\begin{gather}\label{mn_int}
K_{m,n}=U_{m,n}^m(G_n)=\left(p_u+\frac{m}{n^2} \gamma(u) X_L\right)^m(G_n),
\end{gather}
where $G_n$ is the $n$-th term of the recursion
\begin{gather}\label{rec}
G_1=G, \qquad G_{n+1}=X_L(G) G_n+\frac{1}{n}G X_L(G_n),
\end{gather}
starting from any solution $G$ of \eqref{e1}.
\end{prop}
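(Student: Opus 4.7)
I plan to prove that $K_{m,n}$ is a first integral of $H$ by factoring $U_{m,n}^m(G_n)$ into complex combinations that are simultaneous eigenfunctions of $X_L$ acting on $(q,p)$ and quasi-eigenfunctions of $\{H,\cdot\}$, whose $H$-brackets cancel. The backbone of this strategy is an identity for the sequence~\eqref{rec}, proved by induction on $n$:
\begin{equation*}
X_L^2(G_n)=-2n^2(cL+c_0)\,G_n.
\end{equation*}
The base case is~\eqref{e1}. For the inductive step I expand $X_L^2(G_{n+1})$ using the recursion $G_{n+1}=X_L(G)G_n+\tfrac1n GX_L(G_n)$, the base identity $X_L^2(G)=-2\kappa G$ with $\kappa:=cL+c_0$, the inductive hypothesis, and $X_L(\kappa)=0$; after regrouping, the Leibniz expansion collapses to $-2(n+1)^2\kappa\,G_{n+1}$.

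With this identity I introduce the formal splittings
\begin{equation*}
G_n^{\pm}:=G_n\mp\frac{i\,X_L(G_n)}{n\sqrt{2\kappa}},\qquad Z_{\pm}:=p_u\pm i\omega\gamma,\qquad \omega:=\tfrac{m}{n}\sqrt{2\kappa},
\end{equation*}
so that $G_n=\tfrac12(G_n^++G_n^-)$, $X_L(G_n^{\pm})=\pm in\sqrt{2\kappa}\,G_n^{\pm}$, and $X_L(Z_{\pm})=0$ (since $Z_\pm$ depends on $(q,p)$ only through the $X_L$-invariant $L$). A direct computation using $\{H,L\}=0$ and the identity $\gamma''=-2c\gamma\gamma'$ (obtained by differentiating~\eqref{eqgam}) yields $\{H,p_u\}=\omega^2\gamma\gamma'$, $\{H,\gamma\}=-p_u\gamma'$, and $\{H,F\}=(m/n)^2\gamma'\,X_L(F)$ for any $F=F(q,p)$. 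These combine into the two shift relations
\begin{equation*}
\{H,Z_{\pm}\}=\mp i\omega\gamma'\,Z_{\pm},\qquad \{H,G_n^{\pm}\}=\pm im\omega\gamma'\,G_n^{\pm}.
\end{equation*}
The eigenvalue relation for $G_n^{\pm}$ together with $X_L(Z_{\pm})=0$ gives $U_{m,n}(G_n^{\pm})=Z_{\pm}G_n^{\pm}$, which iterates to $U_{m,n}^m(G_n^{\pm})=Z_{\pm}^mG_n^{\pm}$.

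By Leibniz, the two shift relations produce
\begin{equation*}
\{H,Z_{\pm}^mG_n^{\pm}\}=\bigl(\mp im\omega\gamma'\pm im\omega\gamma'\bigr)Z_{\pm}^mG_n^{\pm}=0,
\end{equation*}
so that $K_{m,n}=U_{m,n}^m(G_n)=\tfrac12\bigl(Z_+^mG_n^++Z_-^mG_n^-\bigr)$ is in involution with $H$. The delicate point to watch is the formal use of $\sqrt{2\kappa}$, which need not be real: the $G_n^{\pm}$ and $Z_{\pm}$ live a priori in an algebraic extension, but the sum is invariant under $\sqrt{2\kappa}\mapsto-\sqrt{2\kappa}$, so only even powers survive and $K_{m,n}$ is a genuine polynomial in the momenta with analytic coefficients in $(u,q,p)$. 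Equivalently one may avoid the complex notation altogether by writing $U_{m,n}^j(G_n)=G_nA_j+\tfrac{m\gamma}{n^2}X_L(G_n)B_j$ with $A_j,B_j$ the even/odd parts of $(p_u+i\omega\gamma)^j$, and verify $\{H,U_{m,n}^m(G_n)\}=0$ by direct bookkeeping. That bookkeeping, together with the inductive lemma for $G_n$, is the only nontrivial part of the argument.
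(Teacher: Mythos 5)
Your argument is correct. Note that the paper itself does not reprove this proposition: it is quoted from \cite{CDRraz} (see also \cite{CDRfi}), where the proof goes by induction on the power $r$ in $U_{m,n}^r(G_n)$, working with real coefficients and producing exactly the expansion $U_{m,n}^r(G_n)=P_{m,n,r}G_n+D_{m,n,r}X_L(G_n)$ that is displayed after Theorem~\ref{t2}. Your route is a genuinely different (and slicker) packaging of the same mechanism: the key lemma $X_L^2(G_n)=-2n^2(cL+c_0)G_n$ (your induction on \eqref{rec} is sound, using $X_L(cL+c_0)=0$), the splitting of the $X_L$-module spanned by $G_n$, $X_L(G_n)$ into eigenfunctions $G_n^{\pm}$, the $X_L$-invariant factors $Z_{\pm}$, and the two opposite shift relations $\{H,Z_{\pm}\}=\mp i\omega\gamma' Z_{\pm}$, $\{H,G_n^{\pm}\}=\pm im\omega\gamma' G_n^{\pm}$, whose Leibniz combination kills the bracket in one line; your bracket formulas do use $\gamma''=-2c\gamma\gamma'$ from \eqref{eqgam} and the convention $X_L(F)=\{F,L\}$ (the standard one for the Hamiltonian vector field), which is the correct relative sign making $U_{m,n}(G_n^{\pm})=Z_{\pm}G_n^{\pm}$ and the cancellation consistent. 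What the complexification buys is transparency: the identities $U_{m,n}^m(G_n^{\pm})=Z_{\pm}^mG_n^{\pm}$ explain at a glance why the binomial coefficients and the factors $(-2)^j(cL+c_0)^j$ appear in $P_{m,n,r}$ and $D_{m,n,r}$, which are precisely the even/odd parts of $\bigl(p_u+{\rm i}\tfrac mn\gamma\sqrt{2(cL+c_0)}\bigr)^r$; what the original real induction buys is that it never leaves the ring of genuine (polynomial-in-momenta) functions. Your handling of the only delicate point, the formal $\sqrt{2(cL+c_0)}$, is adequate: either one works in the algebraic extension and observes invariance under the involution $\sqrt{2(cL+c_0)}\mapsto-\sqrt{2(cL+c_0)}$, or one proves the identity on the dense open set where $cL+c_0\neq0$ and concludes by continuity, or one falls back on the even/odd bookkeeping you sketch, which is exactly the cited proof.
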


For $\Omega\neq 0$, the recursive construction of a first integral is more complicated: we construct the following function, depending on two strictly positive integers~$s$,~$r$
\begin{gather} \label{ee2}
\bar K_{2s,r}=\left(U_{2s,r}^2+2\Omega \gamma^{-2}\right)^s(G_r),
\end{gather}
where
the operator $U^2_{2s,r}$ is defined {according to~(\ref{Umn})} as
 \begin{gather*}
U^2_{2s,r}=\left( p_u+\frac{2s}{r^2} \gamma(u) X_L \right)^2,
 \end{gather*}
and~$G_r$ is, as in (\ref{mn_int}), the $r$-th term of the recursion~(\ref{rec}),
with $G_1=G$ solution of~(\ref{e1}). For $\Omega=0$ the functions~(\ref{ee2})
reduce to~(\ref{mn_int}) and can be computed also when the first of the indices is odd.

\begin{teo}[\cite{TTWcdr}] \label{t2} For any $\Omega\in \mathbb{R}$, the Hamiltonian~\eqref{Hest} satisfies, for $m=2s$,
\begin{gather*}
\big\{H,\bar K_{m,n}\big\}=0,
\end{gather*}
for $m=2s+1$,
\begin{gather*}
\big\{H ,\bar K_{2m,2n}\big\}=0.
\end{gather*}
\end{teo}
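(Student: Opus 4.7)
The plan is to handle the even case $m = 2s$ by a Leibniz-type expansion organised by powers of $\Omega$, and then reduce the odd case $m = 2s+1$ to the even case by an index-doubling trick.

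For the even case, let $V := U_{2s,n}^2 + 2\Omega\gamma^{-2}$, so that $\bar K_{2s,n} = V^s(G_n)$. I would compute $\{H, V^s(G_n)\}$ by the Leibniz rule and sort the output by the power of $\Omega$ appearing. The $\Omega^0$ piece is precisely $\{H|_{\Omega=0},\, U_{2s,n}^{2s}(G_n)\}$, which vanishes by the preceding proposition. For the contributions of order $\Omega^i$ with $i \geq 1$, three kinds of terms appear: those coming from the $\Omega/\gamma^2$ summand in $H$ bracketed against the $U^{2s}$-type part of $\bar K$; those coming from the $2\Omega\gamma^{-2}$ summands inside $V^s$ bracketed against the $\Omega$-independent part of $H$; and the cross-brackets between the $\Omega$-corrections on both sides. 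The coefficient $2\Omega$ in $V$ is tuned precisely so that these three families cancel in pairs at each order $i$, with the cancellation driven by the ODE \eqref{eqgam} (supplying $\gamma' = -c\gamma^2 - C$) and by the identity \eqref{e1}, propagated from $G$ to $G_n$ via the recursion \eqref{rec}.

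For the odd case $m = 2s+1$, observe that \eqref{Hest} depends on $(m,n)$ only through the scale-invariant ratio $(m/n)^2 = (2m/2n)^2$; therefore $H$ with parameters $(m,n)$ coincides with $H$ with parameters $(2m,2n)$. Applying the even-case result to these doubled indices, with inductive parameter $s' = m$, yields $\{H, \bar K_{2m,2n}\} = 0$ at once.

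The main obstacle will be the bookkeeping of the $\Omega^i$-by-$\Omega^i$ cancellations, since the Leibniz expansion of the $s$-fold power of $V$ generates many cross-terms and the building blocks $p_u^2$, $\gamma X_L$, and $\gamma^{-2}$ all have non-trivial Poisson brackets with $H$ and with one another. Organising these cross-terms so that the cancellations forced by \eqref{e1}, \eqref{eqgam}, and \eqref{rec} become transparent is the technically delicate step; once a clean one-step identity of the form $\{H, V(f)\} = V(\{H,f\}) + (\text{terms vanishing on the tower } V^j(G_n))$ is in hand, induction on $s$ closes the argument.
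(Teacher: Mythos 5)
The first thing to note is that the paper contains no proof of this statement: Theorem~\ref{t2} is imported from \cite{TTWcdr}, so there is no internal argument to compare with, and the proof in that reference is a direct computation built on explicit control of the intermediate powers $U^r_{m,n}(G_n)$. Within your proposal, the odd-case reduction is correct and complete: since \eqref{Hest} depends on $(m,n)$ only through $(m/n)^2$, the case $m=2s+1$ follows at once from the even case applied to the pair $(2m,2n)$; this is indeed the standard device behind the statement.

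The genuine gap is in the even case, where what you present is a plan whose missing step is precisely the content of the theorem. The $\Omega$-grading itself is legitimate: because $X_L\gamma=0$, multiplication by $\gamma^{-2}$ commutes with $U_{2s,n}$, so $\bar K_{2s,n}=\sum_{j}\binom{s}{j}\big(2\Omega\gamma^{-2}\big)^{j}U_{2s,n}^{2(s-j)}(G_n)$, the order-$\Omega^0$ part of $\{H,\bar K_{2s,n}\}$ vanishes by the Proposition, and the top order vanishes trivially since $\gamma^{-2s}G_n$ is free of $p_u$. But at each intermediate order one must prove an identity of the form $\binom{s}{i}2^{i}\big\{H_0,\gamma^{-2i}U_{2s,n}^{2(s-i)}(G_n)\big\}+\binom{s}{i-1}2^{i-1}\big\{\gamma^{-2},\gamma^{-2(i-1)}U_{2s,n}^{2(s-i+1)}(G_n)\big\}=0$, where $H_0$ denotes \eqref{Hest} with $\Omega=0$, and here the brackets $\{H_0,U^{r}(G_n)\}$ with $r<2s$ do \emph{not} vanish individually: the Proposition only kills the full power $r=2s$. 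Establishing these cancellations requires quantitative lemmas on the intermediate powers --- e.g.\ the expansion $U^r_{m,n}(G_n)=P_{m,n,r}G_n+D_{m,n,r}X_L(G_n)$ recalled in Section~\ref{ex}, together with $X_L^2(G_n)=-2n^2(cL+c_0)G_n$ (the propagation of \eqref{e1} through \eqref{rec}) and $\gamma'=-c\gamma^2-C$ from \eqref{eqgam}. Your statement that the coefficient $2\Omega$ is ``tuned precisely'' so that the families cancel, and your deferral of the ``clean one-step identity'' $\{H,V(f)\}=V(\{H,f\})+\cdots$ to later bookkeeping, are exactly the assertions that need proof; as written, the even case assumes its conclusion rather than deriving it. (A minor point: once graded by $\Omega^i$, only two families of terms appear at each order, not three, but this does not affect the substance of the criticism.)
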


We call $K$ and $\bar{K}$, of the form (\ref{mn_int}) and~(\ref{ee2}) respectively, \emph{characteristic first integrals} of the corresponding extensions. It is proved in \cite{CDRfi,TTWcdr} that the characteristic first integrals~$K$ or~$\bar K$ are functionally independent from $H$, $L$, and from any first integral $I\big(p_i,q^i\big)$ of~$L$. This means that the extensions of (maximally) superintegrable Hamiltonians are (maximally) superintegrable Hamiltonians with one additional degree of freedom (see also~\cite{CDRsuext}). In particular, any extension of a one-dimensional Hamiltonian is maximally superintegrable.

The explicit expression of the characteristic first integrals is given as follows~\cite{TTWcdr}.
For $r\leq m$, we have
\begin{gather*}
U_{m,n}^r(G_n)= P_{m,n,r}G_n+D_{m,n,r}X_{L}(G_n),
\end{gather*}
with
 \begin{gather*}
G_n=\sum_{ j=0}^{[(n-1)/2]}\binom{n}{2 j+1} G^{2 j+1}(X_L G)^{n-2 j-1}(-2)^j(cL+c_0)^ j,
\\
P_{m,n,r}=\sum_{j=0}^{[r/2]}\binom{r}{2 j} \left(\frac mn \gamma \right)^{2 j}p_u^{r-2 j}(-2)^j(cL+c_0)^ j,
\\
D_{m,n,r}=\frac 1{n}\sum_{ j=0}^{[(r-1)/2]}\binom{r}{2 j+1} \left(\frac mn \gamma \right)^{2 j+1}p_u^{r-2 j-1}(-2)^j(cL+c_0)^ j, \qquad m>1,
 \end{gather*}
where $[\cdot]$ denotes the integer part and $D_{1,n,1}=\frac 1{n^2} \gamma$.

The expansion of the first integral (\ref{ee2}) is
\begin{gather*}
\bar K_{2m,n}=\sum_{j=0} ^{m}\binom{m}{j}\left(\frac {2\Omega}{\gamma^2}\right)^jU_{2m,n}^{2(m-j)}(G_n),
\end{gather*}
with $U^0_{2m,n}(G_n)=G_n$.

\begin{rmk}If $L$ is a natural Hamiltonian with a metric and a potential which are algebraic (resp.\ rational) functions of the $\big(q^i\big)$ and if $L$ admits an extension associated with a~function~$G\big(q^i,p_i\big)$ which is polynomial in the momenta and algebraic (resp.\ rational) in the~$\big(q^i\big)$, then the extension of $L$ with $\gamma=(cu)^{-1}$ has first integrals which are polynomial in the momenta and algebraic (resp.\ rational) in the~$\big(q^i\big)$. The first integrals $\bar K_{2m,n}$ are polynomial w.r.t.~$G$,~$L$,~$X_L G$,~$\gamma$ and $p_u$. Indeed, by (\ref{e1}), it is easy to check that $X_L(G_n)$ is a polynomial in~$G$,~$X_L G$,~$L$, as well as~$G_n$.
\end{rmk}

\begin{rmk}In \cite{CDRgen} it is proven that the ODE (\ref{eqgam}) defining $\gamma$ is a necessary condition in order to get a characteristic first integral of the form~(\ref{mn_int}) or~(\ref{ee2}). According to the value of~$c$ and~$C$, the explicit form of~$\gamma(u)$ is given (up to constant translations of~$u$) by
\begin{gather*}
\gamma=
\begin{cases}
-C u, & c=0, \\
\dfrac{1}{T_\kappa (cu)}=\dfrac{C_\kappa (cu) }{S_\kappa (cu)},
& c\neq 0 ,
\end{cases}
\end{gather*}
where $\kappa=C/c$ is the ratio of the constant parameters appearing in (\ref{eqgam}) and $T_\kappa$, $S_\kappa$ and $C_\kappa$ are the trigonometric tagged functions
\cite{7ter} (see also \cite{CDRraz} for a summary of their properties)
 \begin{gather*}
S_\kappa(x)= \begin{cases}
\dfrac{\sin\sqrt{\kappa}x}{\sqrt{\kappa}}, & \kappa>0, \\
x, & \kappa=0, \\
\dfrac{\sinh\sqrt{|\kappa|}x}{\sqrt{|\kappa|}} & \kappa<0
\end{cases}
\qquad
C_\kappa(x)= \begin{cases}
\cos\sqrt{\kappa}x, & \kappa>0 ,\\
1, & \kappa=0, \\
\cosh\sqrt{|\kappa|}x, & \kappa<0,
\end{cases}
\qquad
T_\kappa(x)=\frac {S_\kappa(x)}{C_\kappa(x)}.
 \end{gather*}
Therefore, we have
\begin{gather*}
\gamma'=
\begin{cases}
-C, & c=0, \\
\dfrac{-c}{S_\kappa ^2 (cu)}, & c\neq 0 .
\end{cases}
\end{gather*}
\end{rmk}

\begin{rmk}\label{RemarkA}
When $c\neq 0$ and $\kappa\neq 0$, a translation of the variable $u$ allows not so evident changes in $\gamma$ and $\gamma'$. Indeed, by using standard formulas, we can write for example, for $c\neq 0$,
\begin{gather*}
\gamma'(cu)=\frac{4\kappa c}{({\rm e}^{\sqrt{-\kappa}cu}-{\rm e}^{-\sqrt{-\kappa}cu})^2},\qquad \gamma^2(cu)= -\kappa\frac{({\rm e}^{\sqrt{-\kappa}cu}+{\rm e}^{-\sqrt{-\kappa}cu})^2}{({\rm e}^{\sqrt{-\kappa}cu}-{\rm e}^{-\sqrt{-\kappa}cu})^2}
\end{gather*}
so that, for $c=\pm 1$, $\kappa=\pm 1$ we can easily compute, in correspondence with the indicated translations of $u$, the expressions of~$\gamma'$
\begin{center}
\begin{tabular}{ | l | l |l|}
\hline
{\it Table of $\gamma'$} & $\kappa=1$ & $\kappa=-1$ \tsep{2pt}\\ \hline
 $c=1$ & $- \sin ^{-2} u$ & $-\sinh ^{-2} u$\tsep{2pt} \\ \hline
$c=-1$ & $\sin ^{-2} u$ & $\sinh ^{-2} u$ \tsep{2pt}\\ \hline
 translation: & $u\rightarrow u+\pi/2$ & $ u\rightarrow u+{\rm i} \pi/2$ \tsep{2pt} \\ \hline
 $c=1$ & $- \cos ^{-2} u$ & $\cosh ^{-2} u$ \tsep{2pt}\\ \hline
 $c=-1$ & $\cos ^{-2} u$ & $- \cosh ^{-2} u$ \tsep{2pt}\\ \hline
\end{tabular}
\end{center}
and of $\gamma ^2$
\begin{center}
\begin{tabular}{ | l | l |l|}
\hline
{\it Table of $\gamma^2$} & $\kappa=1$ & $\kappa=-1$ \tsep{2pt}\\ \hline
$c=\pm 1$ & $\tan ^{-2} u$ & $\tanh ^{-2} u$ \tsep{2pt}\\ \hline
translation: & $u\rightarrow u+\pi/2$ & $ u\rightarrow u+{\rm i} \pi/2$ \tsep{2pt}\\ \hline
$c=\pm 1$ & $\tan ^{2} u$ & $\tanh ^{2} u$ \tsep{2pt}\\ \hline
\end{tabular}
\end{center}
where the complex translation of the real variable $u$ leaves~(\ref{Hest}) real.
The table shows the remarkable fact that, for both positive and negative values of~$c$, we can always choose positive or negative values of~$\gamma'$, by selecting suitable translations and values of $\kappa$, i.e., we can obtain extensions with different signature. This is particularly important since, as pointed out in~\cite{CDRgen}, if $L$ is a natural Hamiltonian with more than one degree of freedom, then the integrability conditions of~(\ref{e1}) require that $c$ satisfies certain conditions (related with the sectional curvatures of the metric of~$L$), therefore, the sign of $c$ is fixed by $L$, while $\kappa$ remains free. Hence, without taking in account translations in $u$, the signature of the metric of the extended manifold (i.e., the sign of $\gamma '$ in (\ref{Hest})) will be constrained by $L$. Actually, that sign can be freely chosen.
\end{rmk}

In order to check if any $(N+1)$-dimensional Hamiltonian $H$ is an extension of a $N$-dimensional Hamiltonian $L$, we must check if
\begin{enumerate}\itemsep=0pt
\item[(i)] There exist canonical coordinates $(u,p_u)$ such that $H$ can be written as a \emph{warped Hamiltonian}, that is
\begin{gather}\label{Hext}
H=\frac 12 p_u^2+f(u)+\left(\frac {m}n\right)^2\alpha(u) L, \qquad m,n\in \mathbb{N}\setminus\{0\},
\end{gather}
and the Hamiltonian $L$ is independent of $(u,p_u)$. As we see in the next section, these canonical coordinates can be determined by following an invariant geometrical procedure, when $H$ is a natural Hamiltonian.
\item[(ii)]For some constants $c$ and $c_0$ not both vanishing, the equation (\ref{e1})
admits a non null solution $G$.
\item[(iii)] The functions $\alpha$ and $f$ in \eqref{Hext} can be written as in~(\ref{Hest}) for a $\gamma$ satisfying~(\ref{eqgam}). \end{enumerate}

We remark that if (\ref{e1}) has a solution for $c\neq 0$, then we may assume without loss of generality $c_0=0$, because $L$ is determined up to additive constants.

Examples of the procedure of extension applied on Hamiltonians~$L$ which are not of mecha\-ni\-cal type, defined possibly on Poisson manifolds, are given in~\cite{NoNN}. In the following, we assume that all Hamiltonian functions are natural and defined on cotangent bundles of Riemannian or pseudo-Riemannian manifolds.

\section{Determination of the extended Hamiltonian structure} \label{s3}

In order to show that the natural Hamiltonian (\ref{H}) is an extended Hamiltonian, we have first to check if it can be written, by a point-transformation of coordinates, in the form of a warped Hamiltonian (\ref{Hext}). We will use the intrinsic characterization of such natural Hamiltonians making a simple generalization of the result given in \cite[Theorem~9]{CDRgen}

Let us consider a natural Hamiltonian
\begin{gather}\label{Hgen}
H=\frac 12 \tilde g^{ab}p_ap_b+\tilde V
\end{gather}
on a $(n+1)$-dimensional Riemannian manifold $\big(\tilde Q, \tilde{\mathbf g}\big)$ and let $X$ be a conformal Killing vector of $\tilde{\mathbf g}$, that is a vector field satisfying
 \begin{gather*}
[X,\tilde{\mathbf g}]=\mathcal L_X \tilde{\mathbf g}= \phi \tilde{\mathbf g},
 \end{gather*}
where $\phi$ is a function on $\tilde Q$, called {\em conformal factor}, and $[\cdot,\cdot]$ denotes the Schouten--Nijenhuis bracket. We define $X^\flat$ as the 1-form obtained by lowering the indices of the vector field $X$ with the metric tensor $ \tilde{\mathbf g}$.

\begin{teo}If on $\tilde Q$ there exists a conformal Killing vector field $X$ with conformal factor $\phi$ such that
\begin{gather}
{\rm d}X^\flat \wedge X^\flat=0, \label{Xnorm}\\
{\rm d}\phi\wedge X^\flat=0, \label{phi_u}\\
{\rm d} \|X \| \wedge X^\flat=0, \label{Xconst}\\
{\rm d}\big(X\big(\tilde V\big)+\phi\tilde V\big)\wedge X^\flat=0, \label{Vok}
\end{gather}
then, there exist on $\tilde Q$ coordinates $\big(u,q^i\big)$ such that
$\partial_u$ coincides up to a rescaling with $X$ and the natural Hamiltonian~\eqref{Hgen} has the form~\eqref{Hext}. Moreover, if
\begin{gather}
\tilde R(X)=aX, \qquad a\in \mathbb{R}, \label{Xeig}
\end{gather}
where $\tilde R$ is the Ricci tensor of the Riemannian manifold $\big(\tilde Q, \tilde{\mathbf g}\big)$,
then the function~$\alpha(u)$ in~\eqref{Hext} satisfies the condition $\alpha=-\gamma'$ with
$\gamma$ solution of~\eqref{eqgam}.
\end{teo}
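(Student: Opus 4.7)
The plan is to use the Frobenius-type condition \eqref{Xnorm} to introduce coordinates adapted to $X$, then to exploit the conformal Killing equation together with \eqref{phi_u} and \eqref{Xconst} to put the metric in warped form, and \eqref{Vok} to split the potential accordingly. Once the warped form \eqref{Hext} is established, the Ricci eigenvalue condition \eqref{Xeig} will translate, via standard warped-product curvature formulas, into the ODE \eqref{eqgam} for a suitable choice of $\gamma$.

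For the first part I would apply the Frobenius theorem to \eqref{Xnorm}: the distribution $X^\perp$ is integrable, so $X^\flat\propto{\rm d}u$ for a defining function $u$ of its leaves. Choosing coordinates $q^i$ on one leaf and extending them by the flow of $X$ so that $X\cdot q^i=0$, one obtains a chart $(u,q^i)$ in which $X=X^u(u,q)\,\partial_u$ and $\tilde g_{ui}=0$. The $(u,q^i)$-component of $\mathcal L_X\tilde g=\phi\tilde g$ then reduces to $(\partial_{q^i}X^u)\,\tilde g_{uu}=0$, forcing $X^u=X^u(u)$; condition \eqref{Xconst} makes $\|X\|^2=(X^u)^2\tilde g_{uu}$ a function of $u$ alone, so that a suitable rescaling of $u$ brings $\tilde g_{uu}$ to $1$ while leaving $X=\lambda(u)\,\partial_u$. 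Using \eqref{phi_u}, so that $\phi=\phi(u)$, the $(q^i,q^j)$-component of the conformal Killing equation reads $\lambda\,\partial_u\tilde g_{ij}=\phi\,\tilde g_{ij}$, integrating to $\tilde g_{ij}(u,q)=A(u)\,G_{ij}(q)$. Finally \eqref{Vok} becomes the linear ODE $\lambda\,\partial_u\tilde V+\phi\,\tilde V=\Phi(u)$, whose general solution decomposes as $\tilde V=f(u)+A(u)^{-1}V(q)$; substituting back into \eqref{Hgen} yields \eqref{Hext} with $\alpha=A^{-1}$ (the constant $(m/n)^2$ being absorbable).

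For the second part, I would set $A=f_0^2$, so that $\tilde g={\rm d}u^2+f_0(u)^2\,G_{ij}\,{\rm d}q^i\,{\rm d}q^j$, a warped product over a one-dimensional base. Standard warped-product Ricci identities give $\tilde R_{ui}=0$ and $\tilde R_{uu}=-n f_0''/f_0$, so \eqref{Xeig} is equivalent to $f_0''/f_0\equiv -a/n$, a constant. The corresponding linear equation admits the first integral $(f_0')^2+(a/n)f_0^2=c$; denoting this integration constant by $c$ and putting $C=a/(nc)$, $\gamma=f_0'/(cf_0)$, a direct differentiation yields $\gamma'+c\gamma^2+C=0$, while the first integral gives $-\gamma'=1/f_0^2=\alpha$, as required.

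I expect the main obstacle to lie in the bookkeeping of this last step: matching the integration constant of the Ricci ODE with the parameter $c$ of \eqref{eqgam} up to the $(m/n)^2$ rescaling, handling the pseudo-Riemannian signature (which in principle affects signs in the warped-product formulas), and treating separately the degenerate cases $c=0$ or $a=0$, where $f_0$ is affine and the branch $\gamma=-Cu$ of \eqref{eqgam} has to be invoked.
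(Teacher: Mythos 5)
Your first part follows essentially the paper's route: the paper delegates the statement ``\eqref{Xnorm}, \eqref{phi_u}, \eqref{Xconst} $\Rightarrow$ warped form with $X=F(u)\partial_u$, $\phi=2F'$'' to \cite{CDRgen} and then does exactly the computation you do for the potential (your resolution of the linear equation $\lambda\partial_u\tilde V+\phi\tilde V=\Phi(u)$ into $f(u)+A^{-1}V(q)$ is equivalent to the paper's observation that $F^2\partial_i\tilde V$ is $u$-independent), so your Frobenius/conformal-Killing argument is a correct self-contained version of the delegated step. The generic case of your second part is also sound: with $\tilde g_{uu}=1$, condition \eqref{Xeig} amounts to $f_0''/f_0=-a/n$ constant, and when the integration constant in $(f_0')^2+(a/n)f_0^2=c$ is nonzero, your choice $\gamma=f_0'/(cf_0)$, $C=a/(nc)$ indeed satisfies \eqref{eqgam} and gives $-\gamma'=f_0^{-2}=\alpha$ up to the absorbable $(m/n)^2$. (The paper proves this second part only by citation of \cite{CDRgen}.)

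The genuine gap is in the degenerate case you dismiss at the end, and your description of it is wrong. If $c=0$ but $a\neq0$, the warping function is not affine: $(f_0')^2=-(a/n)f_0^2$ forces $a<0$ and $f_0=A\,{\rm e}^{\pm\omega u}$ with $\omega=\sqrt{-a/n}$, so $\alpha=f_0^{-2}$ is a pure exponential (this is the horospherical-type warping, e.g.\ hyperbolic space as ${\rm d}u^2+{\rm e}^{2\omega u}g$, which does satisfy all of \eqref{Xnorm}--\eqref{Vok} and \eqref{Xeig}). This case cannot be ``handled by the branch $\gamma=-Cu$'', which yields only constant $\alpha$; in fact no solution of \eqref{eqgam} with $(c,C)\neq(0,0)$ has $-\gamma'$ a nonconstant exponential: if $\gamma'=-K{\rm e}^{-2\omega u}$ then $\gamma=\tfrac{K}{2\omega}{\rm e}^{-2\omega u}+k_0$, and matching the ${\rm e}^{-4\omega u}$ term in $c\gamma^2+C=K{\rm e}^{-2\omega u}$ forces $c=0$, after which $C$ constant cannot equal the exponential. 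So on this branch your argument cannot be completed as stated; it has to be recognized and excluded (or treated as a separate non-extendable boundary case), which is exactly the kind of distinction the cited proof in \cite{CDRgen} is responsible for. A smaller conflation in the same sentence: $a=0$, $c\neq0$ gives $f_0$ affine and is covered by the $\kappa=0$ solution $\gamma=1/(c(u-u_0))$, while $\gamma=-Cu$ is needed only when $a=c=0$, i.e.\ $f_0$ constant. By contrast, the signature issues you flag really are bookkeeping, since $\tilde R_{uu}=-n f_0''/f_0$ holds for either sign of the fiber metric once $\tilde g_{uu}=\pm1$, as the Minkowski example of Section~\ref{s3} illustrates.
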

\begin{proof}Following \cite{CDRgen}, conditions (\ref{Xnorm}), (\ref{phi_u}), (\ref{Xconst}) imply that the metric $\tilde{g}^{ab}$ is in the warped form (\ref{Hext}) in suitable canonical coordinates $(u,q^i)$ such that in these coordinates $X$ is of the form \begin{gather*}X=F(u)\partial_u \end{gather*} and
 \begin{gather*}\tilde{g}^{ij}\big(u,q^h\big)=F^{-2}g^{ij}\big(q^h\big),\qquad \phi=2\partial_uF. \end{gather*} Condition (\ref{Vok}) means that
$X\big(\tilde{V}\big)+\phi \tilde{V}$ is constant on the leaves orthogonal to $X$; in other words,
$X\big(\tilde{V}\big)+\phi \tilde{V}$ is a function of $u$ only.
Being $X=F(u)\partial_u$, $\phi=2 F'$ and $X\big(\tilde{V}\big)=F(u)\partial_u \tilde{V}$, last condition reads as $F(u)\partial^2_{ui}\tilde{V}+2F'\partial_i \tilde{V}=0$ for all $i$, that is
$\partial^2_{ui}\tilde{V}-\partial_u \ln F^2 \partial_i \tilde{V}=0$.
Hence, $\partial_u\big(\ln \big(F^2\partial_i \tilde{V}\big)\big)=0$ and therefore
$ F^2\partial_i \tilde V$ is independent of~$u$, that is $\partial_i \tilde{V}\big(u,q^h\big)=\partial_i V\big(q^h\big)F^{-2}$. It follows that
 \begin{gather*}\tilde{V}\big(u,q^h\big)= V\big(q^h\big)F^{-2} +f(u)=\alpha(u)V\big(q^h\big) +f(u) \end{gather*}
and thus it has the form of the potential in~(\ref{Hext}) where $V\big(q^h\big)$ is the potential of $L$.
The second part of the statement is proved in~\cite{CDRgen}.
\end{proof}

\begin{rmk} Conditions (\ref{Xnorm}), (\ref{phi_u}), (\ref{Xconst}) are the already known intrinsic characterization of a~warped metric, while condition (\ref{Vok}) intrinsically assures that the scalar part of the Hamiltonian is compatible with the warped structure. It is remarkable the fact that an intrinsic condition~-- equation~\ref{Xeig}~-- characterizes also the subset of the warped metrics which are involved in extended Hamiltonians. The extended potentials with $\Omega=0$ are intrinsically characterised by $X\big(\tilde{V}\big)+\phi\tilde{V}=0$.
\end{rmk}

We apply the above statement to the Hamiltonian~(\ref{H}) defined on Minkowski plane. First, we remark that on $\mathbb M^2$ the Ricci tensor is zero, therefore, the condition (\ref{Xeig}) is verified for any vector $X$ with $a=0$. A conformal Killing vector satisfying the remaining conditions is
\begin{gather*}
X=C\big(q^1 \partial_1+q^2 \partial_2\big), \qquad \phi=2C,
\end{gather*}
where $C$ is any non zero real constant.

The vector $X$ is the gradient of
the function $2Cq^1q^2$. Performing the coordinate transformation
\begin{gather*}
q^1=\frac 1{\sqrt{2}}\big(x^1+x^2\big), \qquad q^2=\frac 1{\sqrt{2}}\big(x^1-x^2\big),
\end{gather*}
we see that $\big(x^i\big)$ are the standard pseudo-Cartesian coordinates such that the geodesic term of the Hamiltonian $H$ becomes
 \begin{gather*}
\frac 12\big(p_1^2-p_2^2\big).
 \end{gather*}
In these coordinates, we have
 \begin{gather*}
2q^1q^2=\big(x^1\big)^2-\big(x^2\big)^2,
 \end{gather*}
thus, $2q^1q^2=C$ is the equation verified by points having fixed distance from the origin in the pseudo-polar coordinates of~$\mathbb M^2$. Hence, a coordinate system $(u,\chi)$ such that $X$ is parallel to~$\partial_u$ is the pseudo-polar one, with $u$ coinciding with the hyperbolic radius of the hyperbolic circles. Indeed, by applying the transformations
\begin{alignat*}{3}
& x^1=u\cosh \chi, \qquad && x^2=u \sinh \chi,& \\
& u=\sqrt{\big(x^1\big)^2-\big(x^2\big)^2}, \qquad && \chi= \tanh^{-1} \frac {x^2}{x^1},& \\
& q^1=\frac 1{\sqrt{2}} u {\rm e}^\chi, \qquad && q^2=\frac 1{\sqrt{2}} u{\rm e}^{-\chi},&
\end{alignat*}
where $\chi$ is the pseudo-angle, we get after simple computations that~(\ref{H}) becomes
\begin{gather*}
H=\frac 12 \left( p_u^2-\frac 1{u^2}p_\chi^2\right) -\frac 1{u^2}\left(2\alpha \big({\rm e}^{-2\chi}\big)^{2(k+1)}+\beta\big({\rm e}^{-2\chi}\big)^{k+1} \right).
\end{gather*}
Finally, we apply a further coordinate transformation, by setting
\begin{gather*}
\psi=(k+1)\chi,
\end{gather*}
and we get
\begin{gather} \label{HH}
H=\frac 12 p_u^2-\frac {(k+1)^2}{u^2}\left( \frac 12p_\psi^2 +\tilde \alpha {\rm e}^{-4\psi}+\tilde \beta {\rm e}^{-2\psi} \right),
\end{gather}
with $\tilde \alpha=\frac 2{(k+1)^2}\alpha $, $\tilde \beta=\frac {\beta}{(k+1)^2}$, which is an extended Hamiltonian on $\mathbb M^2$ with
 \begin{gather*}
(k+1)^2=-\frac{m^2}{n^2c}
 \end{gather*}
 of the one-dimensional Hamiltonian
\begin{gather}\label{26b}
L=\frac 12p_\psi^2 +V,
\end{gather}
where
\begin{gather*}
 V=\tilde \alpha {\rm e}^{-4\psi}+\tilde \beta {\rm e}^{-2\psi},
\end{gather*}
and $\gamma=\frac{1}{cu}$ satisfying (\ref{eqgam}) for $c_0=0$, $C=0$, hence with $\kappa=0$, and $c$ satisfying the condition
 \begin{gather*}
c=-\frac{m^2}{n^2(k+1)^2},
 \end{gather*}
i.e., $c=-\eta^2$ with $\eta\in\mathbb Q$.

As a second step, we have to find a non-trivial solution $G(p_\psi, \psi)$ of the equation
\begin{gather}\label{eq}
X_L^2G=-2cLG=2\eta^2LG.
\end{gather}
If we assume
 \begin{gather*}
G=g(\psi)p_\psi,
 \end{gather*}
by expanding equation (\ref{eq}) we obtain
\begin{gather*}
\big(g''-\eta^2g \big)p_\psi^3+\big[ \tilde \beta\big( 6g'-\big(4+2\eta^2\big)g \big) {\rm e}^{-2\psi}+\tilde \alpha \big(12 g'-\big(16+2\eta^2\big)g \big){\rm e}^{-4\psi}\big]p_\psi=0.
\end{gather*}
The coefficients of $p_\psi^3$ and $p_\psi$ vanish identically if and only if the following conditions hold
\begin{gather*}
g=a_1{\rm e}^{-\eta \psi}+a_2 {\rm e}^{\eta \psi},\\
g=a_3\big(\tilde \beta {\rm e}^{2\psi}+2\tilde \alpha \big)^{\frac{\eta^2}{12}-\frac 13}{\rm e}^{(\frac{\eta^2}6+\frac 43)\psi},
\end{gather*}
where $a_i$ are arbitrary real constants. A common solution $g$ exists for
 \begin{gather*}
\eta=2, \qquad a_1=0, \qquad a_2=a_3.
 \end{gather*}
Indeed, we get the non-trivial solution of (\ref{eq})
 \begin{gather*}
G={\rm e}^{2\psi}p_\psi, \qquad c=-4,
 \end{gather*}
that corresponds to the extended Hamiltonian
\begin{gather*}
H=\frac 12 p_u^2-\frac {m^2}{4n^2u^2}L, \qquad c=-4.
\end{gather*}
Hence, by solving
\begin{gather}\label{mn}
\frac mn =2(k+1),
\end{gather}
(for any $k\in \mathbb Q$, the equation admits positive integer solutions $m$, $n$) we can write the Hamiltonian~(\ref{HH}) as an extended Hamiltonian.

As immediate corollaries of the fact that the Hamiltonian~(\ref{H}) is an extension of (\ref{26b}), we get additional information about~(\ref{H})
\begin{enumerate}\itemsep=0pt
\item[(i)] the function $L$ is a quadratic first-integral of $H$, in accordance with the results of~\cite{CS};
\item[(ii)] as a consequence of (i), the Hamilton--Jacobi equation associated with $H$ is separable in a St\"ackel coordinates system for any $k\in \mathbb R$. It is evident from (\ref{HH}) that this coordinate system is $(u,\psi)$ (or, equivalently, $(u,\chi)$), the pseudo-polar coordinates of $\mathbb M^2$. The separability of the Hamiltonian~(\ref{H}) was not noticed until now, while the separability of (\ref{H0}) with $\beta=0$, $k\in \mathbb R$ was found in~\cite{MPT} in the same pseudo-polar coordinates.
\end{enumerate}

\begin{rmk}\label{na}
In \cite{MPT} it is proved the superintegrability of other two Hamiltonians in $\mathbb M^2$,
 \begin{gather*}
H_1=2 p_1p_2+\frac{q_2^d}{\sqrt{q_1}}, \qquad d=p, \qquad \text{or} \qquad d=\frac{1-2p}2, \qquad p\in \mathbb N,
 \end{gather*}
with the quadratic in the momenta first integral
 \begin{gather*}
I_1=2p_1(q_2p_2-p_1q_1)+\frac{q_2^{d+1}}{\sqrt{q_1}},
 \end{gather*}
and
 \begin{gather*}
H_2= 2 p_1p_2+q_1q_2^d, \qquad d=\frac{1-p}p, \qquad \text{or} \qquad d=\frac {1+2p}{1-2p},\qquad p\in \mathbb N,
 \end{gather*}
with first integral
 \begin{gather*}
I_2=p_1^2+\frac{q_2^{d+1}}{d+1}.
 \end{gather*}
These Hamiltonians cannot be written as extensions by using point transformations of coordinates. Indeed, due to the form of the extended Hamiltonians, in dimension two it is necessary that the extended Hamiltonian is diagonalized in the coordinates adapted to the warped product characterizing the extension. Necessarily, also the first integral~$L$, quadratic in the momenta, must be diagonalized in these coordinates, since the Hamiltonian is St\"ackel separable. It is easy to check that
$I_1$ and $I_2$ are the unique non-trivial quadratic first integrals of the Hamiltonians~$H_1$ and~$H_2$ respectively, and it is impossible to find coordinates, real or complex, simultaneously diagonalizing $H_1$ and $I_1$, or $H_2$ and $I_2$. It is possible to see that the Killing tensors associated to~$I_1$ and~$I_2$ have an eigenvalue with algebraic multiplicity two, but with one-dimensional eigenspace (in the case of indefinite metric there exist symmetric tensors which do not admit a basis of eigenvectors).
Therefore, the Killing tensors cannot determine separable coordinates~\cite{ben97}.
\end{rmk}

\section{Generalizations}\label{section4}

Since the theory of extended Hamiltonians can be applied to Hamiltonian (\ref{H}), which is the extension of (\ref{26b}) as it is proved in Section~\ref{s3}, we can obtain from (\ref{H}) some new families of extended Hamiltonians, which are maximally superintegrable systems on $\mathbb{M}^2$ and other two-dimensional Riemannian manifolds of constant curvature.

(i) The first evident generalization is that, from~(\ref{mn}), we have that~(\ref{H}) is an extended Hamiltonian for any
 \begin{gather*}
k=\frac {m}{2n}-1,
 \end{gather*}
and, therefore, for any rational number $k\in \mathbb Q$ instead of $k\in \mathbb N$. In \cite{MPT} the superintegrability of~(\ref{HH}) with $\beta=0$ was proved for $k\in \mathbb Q$.

 From Section~\ref{ex}, it is straightforward to find another generalization of the Hamiltonian~(\ref{H}) just by adding to it the scalar
 \begin{gather*}
\Omega u^2=2\Omega q^1q^2 , \qquad \Omega \in \mathbb R,
 \end{gather*}
so that we get a family of superintegrable potentials on $\mathbb{M}_2$ depending on three real parameters $(\alpha, \beta, \Omega)$, in analogy with the Tremblay--Turbiner--Winternitz potential on $\mathbb{E}_2$.
Indeed, the general theory of extended Hamiltonians resumed in Section~\ref{ex} assures the existence of a constant of motion of the form~(\ref{ee2}) for these new Hamiltonians.

(ii) We can try to find more general scalar potentials $V(\psi)$ in $L=\frac 12 p_\psi^2+V$ allowing the existence of functions $G$ solutions of~(\ref{eq}), where $c=-\eta^2$ in order to stay on a Lorentzian manifold, without restrictions on $\eta \in \mathbb R$.

We start by looking for $G=G(\psi)$. By solving equation (\ref{eq}) for both $G$ and $V$ as in the previous section, we find
\begin{gather}
G=C_1{\rm e}^{\eta \psi}-C_2{\rm e}^{-\eta \psi},\label{V10}\\
V=C_3 \big(C_1{\rm e}^{\eta \psi}+C_2{\rm e}^{-\eta \psi}\big)^{-2},\label{V1}
\end{gather}
where $C_1$, $C_2$, $C_3$ are arbitrary parameters (with the obvious constraints $C_3\neq 0$, $C_1$ and $C_2$ not both zero). These general expressions of $G$ and $V$ were obtained already in~\cite{CDRfi}.
Instead, by looking for $G$ of the form $G=g(\psi)p_\psi$, we obtain
\begin{gather}
g=C_1{\rm e}^{\eta \psi}+C_2{\rm e}^{-\eta \psi},\label{V20}\\
V=(C_1{\rm e}^{\eta \psi}+C_2{\rm e}^{-\eta \psi})^{-2}\big(C_3+C_4\big(C_1{\rm e}^{\eta \psi}-C_2{\rm e}^{-\eta \psi}\big) \big) =\left(C_3+\frac{C_4}{\eta} g'\right)g^{-2},\label{V2}
\end{gather}
where $C_1$, $C_2$, $C_3$, $C_4$ are arbitrary parameters (with the obvious constraints $C_1$, $C_2$ not both zero, as well as $C_3$, $C_4$).

It is immediate to see that (\ref{V1}) is a particular case of (\ref{V2}), obtained for $C_4=0$, while the functions $G$ and $g$ (given in (\ref{V10}) and (\ref{V20}), respectively) are equal up to the sign of~$C_2$. By setting
\begin{gather*}\eta=2, \qquad C_1=1, \qquad C_2=0, \qquad C_3=\tilde \alpha,\qquad C_4=\tilde \beta,
\end{gather*}
in (\ref{V20}), (\ref{V2}), we exactly find the potential of the Hamiltonian (\ref{26b}), with the corresponding function~$G$ solution of~(\ref{eq}).
In general, the change of coordinates
 \begin{gather*}
q^1=\frac u{\sqrt{2}} {\rm e}^{\frac \eta k \psi},\qquad q^2=\frac u{\sqrt{2}} {\rm e}^{-\frac \eta k \psi},
 \end{gather*}
transforms the extension
 \begin{gather*}
H=\frac 12 p_u^2- k^2\frac 1{\eta^2 u^2}\left(\frac 12 p_\psi^2+V\right)+\eta^4 \Omega u^2,\qquad k \in \mathbb Q,
 \end{gather*}
into
 \begin{gather*}
H=p_1p_2-\frac {4\tilde k^2}{2 \eta^2 q^1 q^2}\big(C_1z^{\tilde k}+C_2z^{-\tilde k}\big)^{-2}\big(C_3+C_4\big(C_1z^{\tilde k}-C_2z^{-\tilde k}\big) \big)+2\eta^4\Omega q^1 q^2,
 \end{gather*}
with
 \begin{gather*} z=\frac {q^1}{q^2}, \qquad \tilde k=\frac k2. \end{gather*}
It is now evident that, since $\eta$ is now an inessential multiplicative parameter of the potential, the extensions with a general $\eta \in \mathbb R$ are Hamiltonians essentially equivalent to those obtained for $\eta=2$.

We can write the functions (\ref{V20}) and (\ref{V2}) in a simpler form, focusing on essential parameters (with $\eta$ fixed equal to~2).
According to the sign of $C_1C_2$ we can write~(\ref{V20}) as
\begin{gather*}
g= a_0 {\rm e}^{ 2 \epsilon \psi}, \qquad \epsilon=\pm 1, \qquad V= \alpha {\rm e}^{- 2 \epsilon \psi} +\beta {\rm e}^{ -4\epsilon \psi}, \qquad C_1C_2=0,
\end{gather*}
where $a_0\neq 0$ is $C_1$ or $C_2$, $\epsilon=1$ if $C_2=0$ and $\epsilon=-1$ if $C_1=0$, $\alpha=C_4\epsilon a_0^{-1}$, $\beta=C_3a_0^{-2}$,
\begin{gather}
\label{Vcosh}
g=A\cosh (2\psi+\psi_0), \qquad V= \dfrac{\alpha+\beta\sinh(2\psi+\psi_0)}{\cosh^2(2\psi+\psi_0)} ,\qquad C_1C_2>0, \\
\label{VTTW}
 g=A\sinh (2\psi+\psi_0) ,\qquad V= \dfrac{\alpha+\beta\cosh(2\psi+\psi_0)}{\sinh^2(2\psi+\psi_0)} , \qquad C_1C_2<0,
\end{gather}
where
$C_1=A{\rm e}^{\psi_0}$, $C_2=\frac{|C_1C_2|}{C_1C_2}A{\rm e}^{-\psi_0}$,
$\psi_0=\tfrac 12 \ln|C_1/C_2|$, $\alpha=C_3/A^2$, $\beta=C_4/A$.

The form~(\ref{VTTW}) appeared already in~\cite{TTWcdr} and it is mapped
to (\ref{Vcosh}) by an imaginary translation of~${\rm i}\pi/2$. Moreover,
the transformation
 \begin{gather*}
q^1=\frac u{\sqrt{2}} {\rm e}^{\frac 2 k \psi},\qquad q^2=\frac u{\sqrt{2}} {\rm e}^{-\frac 2 k \psi},
 \end{gather*}
maps the extended Hamiltonians
 \begin{gather*}
\frac 12 p_u^2 - k^2\frac 1{4 u^2}\left(\frac 12 p_\psi^2+V\right)+16 \Omega u^2,\qquad k \in \mathbb Q,
 \end{gather*}
into
 \begin{gather*}
H=p_1p_2-\frac {\tilde k^2}{2 q^1 q^2}\big(C_1z^{\tilde k}+C_2z^{-\tilde k}\big)^{-2}\big(C_3+C_4\big(C_1z^{\tilde k}-C_2z^{-\tilde k}\big) \big)+32\Omega q^1 q^2.
 \end{gather*}

\begin{rmk} The potentials $V$ described in~(\ref{V2}) are always algebraic functions of $q^1$, $q^2$, provided $\eta \in \mathbb Q$ (which in this case is absorbed by the rational~$k$).
\end{rmk}

The generalized extended Hamiltonian obtained from (\ref{H}) is therefore
\begin{gather}\label{ge}
\tilde H=\frac 12 p_u^2-\frac{m^2}{\eta^2 n^2 u^2}\left(\frac 12 p_\psi^2+V \right)+\eta^4\Omega u^2,
\end{gather}
where $V$ is (\ref{V2}) and $\frac mn=\eta(k+1)$, $c=-\eta^2$.

\begin{rmk} The Hamiltonian (\ref{ge}) becomes (\ref{HH}) when $C_2=0$, $C_1=1$, $C_3=\tilde \alpha$, $C_4=\tilde \beta$. When $C_2=1$, $C_1=0$, the potentials correspond up to the exchange between~$q^1$ and $q^2$. For~$C_1$ and $C_2$ both $\neq 0$, we obtain new superintegrable potentials. Of the four parameters $C_i$ only two are essential, the remaining two being reducible to a single multiplicative factor.
\end{rmk}

One could search for other superintegrable potentials $V$ by assuming different expressions for $G=G(\psi, p_\psi)$, but the resulting differential equations are much less easy to solve than for $G$ constant or linear in the momentum~$p_\psi$.

(iii) From the Hamiltonian (\ref{ge}) we can obtain another family of superintegrable Hamiltonians depending on a rational parameter~$k$ by applying to it the coupling-constant metamorphosis (CCM) as described in \cite{CDRpw, Hi, KMccm, PWccm}.

The CCM transforms integrable or superintegrable systems in new integrable or superintegrable ones, by mapping first integrals in first integrals. It is characterized by the following theorem

{\it Let us consider a Hamiltonian $H =\hat H - \tilde EU$ in canonical coordinates $\big(x^i,p_i\big)$, where $\hat H\big(x^i,p_i\big)$ is independent of the arbitrary parameter $\tilde E$ and $U\big(x^i\big)$, with an integral of the motion $K$ $($depending on~$\tilde E)$. If we define the CCM of $H$ and $K$ as $ H' = U^{-1}\big( \hat H -E\big)$ and $ K' = K|_{\tilde E=\tilde{H}}$ then~$ K'$ is an integral of the motion for~$H'$.}

In our case, the configuration manifold of the resulting system is again~$\mathbb M^2$. Indeed, from~(\ref{ge}), by applying the CCM based on $\tilde E=-\eta^4\Omega$, and recalling that $k+1=\frac mn$, we get
\begin{gather*}
H'= \frac 1{2u^2} p_u^2-\frac{m^2}{\eta^2n^2u^4}\left(\frac 12 p_\psi^2+V \right)-\frac{E}{u^2},
\end{gather*}
where $V$ is given by (\ref{V2}). By performing the rescaling $u^2=2v$, we obtain
\begin{gather}\label{H2}
H'= \frac{1}{2}p_v^2-\frac{m^2}{4\eta^2n^2v^2}\left(\frac 12 p_\psi^2+V \right)-\frac E{2v},
\end{gather}
or, in coordinates
 \begin{gather*}
q^1=\frac v{\sqrt{2}} {\rm e}^{\frac \eta k \psi},\qquad q^2=\frac v{\sqrt{2}} {\rm e}^{-\frac \eta k \psi},
\\
H'=p_1p_2-\frac {4\tilde k^2}{2 \eta^2 q^1 q^2}\big(C_1z^{\tilde k}+C_2z^{-\tilde k}\big)^{-2}\big(C_3+C_4\big(C_1z^{\tilde k}-C_2z^{-\tilde k}\big)\big)-\frac E{2\sqrt{2q^1 q^2}},
 \end{gather*}
with
 \begin{gather*} z=\frac {q^1}{q^2}, \qquad \tilde k=\frac m{4n}. \end{gather*}

The configuration manifold of $H'$, is again $\mathbb~M^2$. The form of the characteristic first integral of~(\ref{H2}) can be found in~\cite{CDRpw}.

(iv) The choice $c=-\eta^2$, for $\eta =2$, and $C=\kappa=0$ in Section~\ref{s3} has been made in order to obtain the Hamiltonian (\ref{H}) in $\mathbb M^2$ as an extension of~(\ref{26b}). However, from Section~\ref{section1} we know that for different values of $c$ and $C$ we can obtain extensions of the same~$L$ of~(\ref{26b}) which are defined on
manifolds different from~$\mathbb M^2$, but which still belong to the same family of extensions of~$L$ and, therefore, which can be seen as generalizations of the Hamiltonian (\ref{H}) in these new manifolds.
A first possible generalization is the following: we consider the expressions~(\ref{V20}),~(\ref{V2}) for~$G$ and~$V$ and we allow that $\eta$ takes imaginary values. Such values of $\eta$ imply the appearance of trigonometric functions in~$G$ and~$V$, which remain real provided $C_1=\bar C_2$ and $C_4$ are imaginary. In this way we see that the extension
\begin{gather}\label{h3}
H_0=\frac 12 p_u^2+\frac {m^2}{|\eta|^2n^2u^2}L+\eta^4u^2\Omega
\end{gather}
is now defined in the Euclidean plane, since $c=|\eta|^2>0$. It follows that $H_0$ is nothing but the Tremblay--Turbiner--Winternitz Hamiltonian discussed in \cite{TTWcdr}.

 Moreover, we can consider values of $\kappa$ different from 0, according to the freedom of choice of the parameter $C$ in (\ref{eqgam}). Consequently, for $\eta$ real or imaginary (i.e., for $c<0$ or $c>0$), we have the possible extensions of $L$
 \begin{gather}\label{h1}
H_1=\frac 12 p_u^2+\frac {(k+1)^2c}{\sin^2 cu}L+\Omega \tan^2 cu,
\qquad \kappa=1,
\\
\label{h2}
H_2=\frac 12 p_u^2+\frac {(k+1)^2c}{\sinh^2 cu}L+\Omega \tanh^2 cu,
\qquad \kappa=-1,
\end{gather}
 corresponding, for $c>0$, to Hamiltonians on the sphere $\mathbb S^2$ and on the pseudosphere $\mathbb H^2$, respectively, and, for $c<0$, to Hamiltonians defined on the de~Sitter space ${\rm dS}^2$ and on the anti-de~Sitter space ${\rm AdS}^2$, respectively.
We recall that, in this case, the value of~$c$ is related to the value of the curvature of the configuration manifold of~$H_1$,~$H_2$.

In this way, we obtained some kind of generalizations of the Hamiltonian~(\ref{H}) in manifolds of constant positive and negative curvature, on Riemannian and pseudo-Riemannian manifolds, since they are all obtained as extensions of the same one-dimensional Hamiltonian.

 Finally, we consider the case $c=0$, which cannot be deduced from (\ref{V20}), (\ref{V2}), since we assumed there that $c_0=0$ in (\ref{e1}).
For the case $c=0$, $c_0\neq0$, we have from \cite{TTWcdr} either
 \begin{gather*}g=(a_1 \psi +a_2),\qquad V=\frac {c_0}{4a_1^2} g^2+c_1 g+c_2, \end{gather*}
or
 \begin{gather*} g=a_2,\qquad V=c_0 \psi^2 +c_1 \psi +c_2. \end{gather*}
Therefore, we have that the function~$L$~(\ref{26b}) is not included in this case.

 The Hamiltonians (\ref{h1}), (\ref{h2}), defined on conformally flat Riemannian manifolds, can be again transformed into new superintegrable systems by applying to them the CCM.

The results of above are summarized in the following statement

\begin{teo}\label{gen} The Hamiltonians \eqref{ge}, \eqref{H2}, \eqref{h3}, \eqref{h1}, \eqref{h2} are superintegrable with a~first integral polynomial in the momenta computed through the procedure described in Section~{\rm \ref{ex}}.
\end{teo}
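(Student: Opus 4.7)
The plan is to treat Theorem~\ref{gen} as a summary statement whose content has essentially already been assembled in Sections~\ref{ex}, \ref{s3}, and in items (i)--(iv) of the present section. For each of the five Hamiltonians we will exhibit, by inspection, the data $(L,G,c,c_0,\kappa,m,n,\Omega)$ that realise it as an extended Hamiltonian of the form~(\ref{Hest}), and then invoke Theorem~\ref{t2} to produce the polynomial first integral. Functional independence of this integral from $H$ and from first integrals of $L$ (hence superintegrability, maximal in view of $\dim L=1$) follows from the general result quoted at the end of Section~\ref{ex}.

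First I would treat the master case~(\ref{ge}). Take $L=\tfrac12 p_\psi^2+V$ with $V$ given by~(\ref{V2}), $G=g(\psi)p_\psi$ with $g$ as in~(\ref{V20}); a direct computation (identical to the one in Section~\ref{s3}, now with $\eta\in\mathbb R$ generic) shows that $X_L^2G=2\eta^2 LG$, so $(c,c_0)=(-\eta^2,0)$ and $\gamma=1/(cu)$ (the case $\kappa=0$ of the table in Remark~\ref{RemarkA}). With $m/n=\eta(k+1)\in\mathbb Q$, the Hamiltonian~(\ref{ge}) is exactly~(\ref{Hest}), and Theorem~\ref{t2} supplies the characteristic integral $\bar K_{2m,2n}$ (or $\bar K_{m,n}$ when $m$ is even). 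The Hamiltonian~(\ref{h3}) is handled in the same way, except that one now allows $\eta$ imaginary, which turns~(\ref{V20})--(\ref{V2}) into real trigonometric expressions under the reality constraints $C_1=\bar C_2$, $C_4\in \mathrm{i}\mathbb R$ already recorded in item~(iv); here $c=|\eta|^2>0$ and the manifold is Euclidean.

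For~(\ref{h1}) and~(\ref{h2}), the underlying $L$ and $G$ are unchanged, but now $C\neq 0$, so $\gamma=1/T_\kappa(cu)$ with $\kappa=\pm 1$. The entries of the tables in Remark~\ref{RemarkA} identify $-\gamma'$ with the coefficient $(k+1)^2c/\sin^2(cu)$, resp.\ $(k+1)^2c/\sinh^2(cu)$, and $\gamma^2$ with $\tan^{-2}(cu)$, resp.\ $\tanh^{-2}(cu)$; up to absorbing $\Omega$ into the constant in front of $\gamma^2$, this matches~(\ref{Hest}) term by term. The sign of $c$ then picks out one of $\mathbb S^2$, $\mathbb H^2$, $\mathrm{dS}^2$, $\mathrm{AdS}^2$ as the configuration manifold. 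Applying Theorem~\ref{t2} again produces the required first integral.

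For~(\ref{H2}) I would invoke the coupling-constant-metamorphosis theorem recalled in item~(iii): write~(\ref{ge}) as $H=\hat H-\tilde E U$ with $U=u^2$ and $\tilde E=-\eta^4\Omega$, so that $\hat H$ is independent of $\tilde E$. Then $H'=U^{-1}(\hat H-E)$ is, after the rescaling $u^2=2v$, precisely~(\ref{H2}), and the integral $\bar K_{m,n}\big|_{\tilde E=\tilde H}$ of~(\ref{ge}) transfers to a polynomial integral of~(\ref{H2}) (its explicit form being worked out in~\cite{CDRpw}). Nothing conceptual is at stake here: the work is purely substitutional. The main obstacle, and the only delicate bookkeeping in the whole argument, is to keep the parameter dictionary $(c,\kappa,\eta,m,n,k,\Omega)$ consistent across the four manifolds and the CCM step, in particular tracking the sign changes and the reality conditions when $\eta$ becomes imaginary and when $c$ changes sign; once this is done, Theorem~\ref{t2} and the CCM theorem do all the remaining work.
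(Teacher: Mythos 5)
Your proposal is correct and follows essentially the same route as the paper: Theorem~\ref{gen} is there proved exactly by assembling items (i)--(iv) of Section~\ref{section4} --- exhibiting each Hamiltonian as an extension of the same one-dimensional $L$ with $G=g(\psi)p_\psi$, $c=-\eta^2$ (real or imaginary $\eta$, $\kappa=0,\pm1$), invoking Theorem~\ref{t2} and the functional-independence result of Section~\ref{ex}, and handling~(\ref{H2}) via the CCM theorem with the explicit integral deferred to~\cite{CDRpw}. The only slip is cosmetic: in~(\ref{h1}),~(\ref{h2}) the term $\Omega\tan^2 cu$ (resp.\ $\Omega\tanh^2 cu$) is the $\Omega/\gamma^2$ term of~(\ref{Hest}) with $c_0=0$, not a rescaled $\gamma^2$ term, so no ``absorption'' of $\Omega$ is needed.
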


\begin{rmk} The Hamiltonian
 \begin{gather*}
\tilde H=p_1p_2-\alpha q_2^{2k+1}q_1^{-2k-3}-\frac \beta 2 q^k_2q_1^{-k-2} +2 \Omega q^1q^2, \qquad k\in \mathbb{R},
 \end{gather*}
as well as any Hamiltonian if the form
 \begin{gather*}
H=p_1p_2+\frac 1{q^1q^2}F\left(\frac {q^1}{q^2}\right)+G\big(q^1q^2\big),
 \end{gather*}
remain separable in pseudo-polar coordinates for any $k\in \mathbb R$, and therefore it is Liouville integrable.
\end{rmk}

\begin{rmk} All the Hamiltonians of Theorem \ref{gen} correspond essentially to the Tremblay--Turbiner--Winternitz system, its generalizations on constant-curvature manifolds and its CCM transformed Hamiltonians. This becomes evident by comparing the Hamiltonians of above with those given in \cite{TTWcdr, KMK, MPW} and references therein.
\end{rmk}

\section{Quantization}\label{section5}

We consider the Hamiltonian (\ref{ge}) with $V$ given by (\ref{V2}), thus including the Hamiltonian (\ref{H}). We indicate how to quantize it together with its characteristic first integral corresponding to any rational parameter, proving the quantum superintegrability of this two-dimensional system with a potential involving three arbitrary parameters. The quantization is made possible by applying the method described in~\cite{CRsl}, where the Kuru--Negro quantization introduced in~\cite{KN}, based on factorization in shift and ladder operators, is adapted to extended Hamiltonians on flat manifolds.
 Indeed, in order to apply the method for $H$ defined on a two-dimensional manifold, it is necessary to find functions of the form
 \begin{gather*}
 F^{\pm}=\pm F'(\psi)p_\psi+F(\psi)f+\frac 1fc_1, \qquad f=\sqrt{2\big(\eta^2L+c_0\big)},
 \end{gather*}
 solutions of
 \begin{gather}\label{le}
 X_L^2F^\pm = f F^\pm.
 \end{gather}
 The functions $F^\pm$ are known as {\em ladder functions} of $L$ and they are determined by solutions $F(\psi)$ of the system
 \begin{gather*}
 F''-\eta^2F=0,\\
 V'F'+2\eta^2 VF+c_1=0,
 \end{gather*}
equivalent to (\ref{le}) and corresponding to~(20) and~(21) of~\cite{CRsl}, for a suitable constant~$c_1$. We have in our case
 \begin{gather*}
 F=g', \qquad c_1=-C_4\eta^3,
 \end{gather*}
 where $g$ is given in~(\ref{V20}).
 The construction of the quantum Hamiltonian and of the characteristic symmetry operator is then explicitely described in \cite[Theorem~10]{CRsl}. To the function $L$ we associate the quantum operator
 \begin{gather*}
 \hat L=-\frac {\hbar^2} 2\partial^2_{\psi \psi}+V,
 \end{gather*}
 which can be extended to the quantum Hamiltonian
 \begin{gather*}
 \hat H=-\frac {\hbar^2} 2\left( \partial ^2_{uu}+\frac 1u\partial_u -\frac{k^2}{\eta^2u^2} \partial^2_{\psi \psi} \right)-\frac{k^2}{\eta^2u^2}V+\frac{\Omega \eta^4}2u^2, \qquad k=\frac mn, \qquad m,n \in \mathbb N\setminus \{0\},
 \end{gather*}
 which commutes with $\hat L$. The operator $\tilde H$ coincides with the Laplace--Beltrami operator of the metric
\begin{gather*}
 {\rm d}s^2={\rm d}u^2-\frac{\eta^2u^2}{k^2}{\rm d}\psi^2,
\end{gather*}
 which, for $\eta^2=1$, is the metric of the Minkowski plane in pseudo-polar coordinates $(u,k\psi)$, plus a scalar term.
 Then, the operator
 \begin{gather*}
 \hat{X}^k_{\epsilon,E}=\big(\hat{G}^{+}_\epsilon\big)^{2n} \circ
 \big(\hat{A}^{1,1}_{k\epsilon}\big)^{2m}\circ \big(\hat{D}^{+}_E\big)^m,
 \end{gather*}
 is such that
 \begin{gather*}\label{hHhX}
 \hat{H}\big(\hat{X}^k_{\epsilon, E} f_E\big)= E\hat{X}^k_{\epsilon, E} (f_E),
 \end{gather*}
 for all multiplicatively separated eigenfunctions $f_E=\phi_E^{M}(u)\chi_\lambda(\psi)$ of~$\hat{H}$ such that $\chi_\lambda$ is an eigenfunction of $\hat{L}$ with eigenvalue~$\lambda$, and
 \begin{gather*}M=k^2\lambda,\qquad \epsilon
 =\sqrt{| \lambda |}, \qquad
 \eta^2\lambda\leq 0, \end{gather*}
 where $\phi^M_E(u)$ is (not restrictively) any eigenfunction of the ``radial'' operator
 \begin{gather*}\hat H^M=-\frac {\hbar^2} 2\left( \partial ^2_{uu}+\frac 1u\partial_u \right)-\frac M{\eta^2u^2}, \end{gather*}
 corresponding to the eigenvalue $E$, and where
 \begin{gather*}
 \hat A^{1,1}_{k\epsilon}=\partial_u-\frac 1\hbar \left(\eta^2\sqrt{\Omega}u+\frac{k\epsilon \sqrt{2}}{2\eta u} \right),\\
 \hat D^\pm_E=\frac \hbar{\sqrt{2}}u\partial_u+\frac \hbar{\sqrt{2}}\pm\left(-\frac E{2\eta^2\sqrt{\Omega}}+\sqrt{\frac{\Omega}2}\eta^2 u^2 \right),\\
 \hat G^{\pm}_\epsilon=\eta^2 g\partial_\psi\pm\epsilon \eta \frac {\sqrt{2}}\hbar g'+\frac{2 \eta^2 C_4}{\hbar\big(\hbar \eta\pm 2\sqrt{2}\epsilon\big)}.
 \end{gather*}

The operator $\hat X^k_{\epsilon,E}$ is therefore a symmetry of $\hat H$, called ``warped symmetry'' in~\cite{CRsl}, corresponding to the characteristic first integral of the classical extended Hamiltonian~$H$.

\section{Conclusions}
In this article we show that the theory of extended Hamiltonians allows not only to prove the superintegrability of the Hamiltonian~(\ref{H}) for any non-zero rational $k$, but also gives immediate generalizations of~$H$, both on flat and curved manifolds, which are still superintegrable Hamiltonians and include the superposition of a harmonic oscillator term.

Moreover, the application of the coupling-constant-metamorphosis produces further superintegrable Hamiltonians, when adapted to the structure of extended Hamiltonians.

The possible generalizations are not exhausted by the examples given in the present paper. For example, the methods employed in~\cite{CDRsuext} allow to extend $n$-dimensional superintegrable Hamiltonians into $(n+1)$-dimensional ones, again superintegrable. The application of those methods to the present case has not been considered yet.

The procedure of Laplace--Beltrami quantization of extended Hamiltonians and their associated first integrals of high-degree developed in \cite{CRsl} on flat manifolds is applied to the present case, so the quantum superintegrability of (\ref{H}) and its generalisations on flat manifolds is proved. The problem of the quantum superintegrability of (\ref{H}) was left open in~\cite{CS}. The quantization of extended Hamiltonians on curved manifolds, such as (\ref{h1}) and (\ref{h2}), is still an open problem.

The search for extended Hamiltonian structures is limited for the moment to point-coordinate transformations. A possible future direction of research is the search for extended Hamiltonian structures under more general canonical transformations, for example concerning the systems described in Remark~\ref{na}.

\pdfbookmark[1]{References}{ref}
\LastPageEnding

\end{document}